\definecolor{dblue}{rgb}{0,0.1,.6}
\definecolor{lightgray}{gray}{0.8}
\newcommand{\id}{\mathbbm{1}}
\newcommand{\bra}{\langle}
\newcommand{\ket}{\rangle}
\newcommand{\bbra}{\langle\!\langle}
\newcommand{\kket}{\rangle\!\rangle}
\newcommand{\Tr}{\operatorname{Tr}}
\newcommand{\ud}{\mathrm{d}}
\newcommand{\hA}{\hat{A}}
\newcommand{\hB}{\hat{B}}
\newcommand{\hH}{\hat{H}}
\newcommand{\hK}{\hat{K}}
\newcommand{\hL}{\hat{L}}
\newcommand{\hP}{\hat{P}}
\newcommand{\hQ}{\hat{Q}}
\newcommand{\hR}{\hat{R}}
\newcommand{\hS}{\hat{S}}
\newcommand{\hb}{\hat{b}}
\newcommand{\htau}{{\hat{\tau}}}
\newcommand{\dm}{{\hat{\rho}}}
\newcommand{\mri}{\mathrm{i}}
\newcommand{\CC}{\mathbb{C}}
\renewcommand{\Re}{\operatorname{Re}}
\newcommand{\mc}[1]{\mathcal{#1}}
\newcommand{\pdag}{{\phantom{\dag}}}
\renewcommand{\L}{\mc{L}}
\newcommand{\K}{\mc{K}}
\newcommand{\J}{\mc{J}}
\newcommand{\D}{\mc{D}}
\newcommand{\E}{\mc{E}}
\renewcommand{\H}{\mc{H}}
\newcommand{\A}{\mc{A}}
\newcommand{\B}{\mc{B}}
\newcommand{\R}{\mc{R}}
\newcommand{\T}{\mc{T}}
\renewcommand{\ss}{\text{ss}}
\newtheorem{proposition}{Proposition}
\newtheorem{lemma}{Lemma}
\def\@hangfrom@section#1#2#3{\@hangfrom{#1#2}#3}
\def\@hangfroms@section#1#2{#1#2}
\newcommand{\Emph}[1]{\textbf{\emph{#1}}}
\newcommand{\dart}  {Department of Physics and Astronomy, Dartmouth College, Hanover, New Hampshire 03755, USA}
\newcommand{\duke}  {Department of Physics, Duke University, Durham, North Carolina 27708, USA}
\newcommand{\qlab}  {National Quantum Laboratory, University of Maryland, College Park, MD 20742, USA}
\newcommand{\umd}   {Department of Physics, University of Maryland, College Park, MD 20742, USA}
\begin{document}

\title{\texorpdfstring{A direct algebraic proof for the non-positivity of Liouvillian spectral values\\ in Markovian quantum dynamics}{A direct algebraic proof for the non-positivity of Liouvillian spectral values in Markovian quantum dynamics}}
\author{Yikang Zhang}
\affiliation{\dart}
\affiliation{\duke}
\author{Thomas Barthel}
\affiliation{\duke}
\affiliation{\qlab}
\affiliation{\umd}
\date{April 10, 2026}

\begin{abstract}
Markovian open quantum systems are described by the Lindblad master equation $\partial_t\rho =\L(\rho)$, where $\rho$ denotes the system's density operator and $\L$ the Liouville super-operator, which is also known as the Liouvillian. For systems with a finite-dimensional Hilbert space, it is a fundamental property of the Liouvillian that the real parts of all its eigenvalues are non-positive. Analogously, for infinite-dimensional Hilbert spaces, the Liouvillian as a map on trace-class operators only has spectral values with non-positive real parts. The usual arguments for these properties are indirect, using that $\L$ generates a quantum channel and that quantum channels are contractive. We provide a direct algebraic proof based on the Lindblad form of Liouvillians.
\end{abstract}

\maketitle

\section{Introduction}
\subsection{Preliminaries}
In seminal contributions, Lindblad, Gorini, Kossakowski, and Sudarshan showed that the generators $\L_t$ for Markovian quantum dynamics
\begin{equation}\label{eq:LME}
	\partial_t\dm_t=\L_t(\dm_t)
\end{equation}
of open quantum systems, where $\dm_t$ denotes the system's density operator at time $t$, can always be written in the \emph{Lindblad form} \cite{Lindblad1976-48,Gorini1976-17,Breuer2007,Rivas2012,Wolf2008-279}
\begin{equation}\label{eq:Lindblad}
	\partial_t{\dm_t} = \L(\dm_t)=-\mri[\hH,\dm_t]
	 +\sum_a \Big(\hL_a^\pdag \dm_t \hL_a^\dag-\frac{1}{2} \{\hL_a^\dag \hL_a^\pdag,\dm_t\}\Big).
\end{equation}
Here, $\hH$ is the Hamiltonian, comprising the system Hamiltonian and possibly further terms like the Lamb-shift Hamiltonian arising in the weak-coupling regime \cite{Davies1974-39,Davies1976-219,Duemcke1979-34,Breuer2007}, the \emph{Lindblad operators} $\hL_a$ describe the effect of interactions with the environment, $[\hA,\hB]=\hA\hB-\hB\hA$ is the commutator, and $\{\hA,\hB\}=\hA\hB+\hB\hA$ is the anticommutator.
We refer to the generator $\L$ as the \emph{Liouville super-operator} or \emph{Liouvillian}.
To shorten the notation, we have dropped time labels $t$ for the Liouvillian, Hamiltonian, and Lindblad operators in the Lindblad master equation \eqref{eq:LME}. In the following, we will only be concerned with the spectrum of $\L$ at a fixed time.

A first fundamental property of the Liouvillian $\L$ is that its adjoint 
\begin{equation}\label{eq:LindbladAdj}
	\L^\dag(\hA)=\mri[\hH,\hA] + \sum_a\Big(\hL_a^\dag\hA\hL_a^\pdag-\frac{1}{2}\big\{\hL_a^\dag\hL_a^\pdag,\hA\big\}\Big)
\end{equation}
has at least one zero eigenvalue. Here, the adjoint is defined with respect to the trace pairing (Hilbert-Schmidt inner product)
\begin{equation}\label{eq:tracePairing}
	\bbra\hA|\hR\kket:=\Tr(\hA^\dag\hR)\quad \text{such that}\quad
	\bbra\hA|\L(\hR)\kket=:\bbra\L^\dag(\hA)|\hR\kket
\end{equation}
for all operators $\hA,\hR$ on the Hilbert space $\H$ (bounded operators $\A$ and trace-class operators $\hR$ in the case of infinite-dimensional $\H$).
The existence of a zero eigenvalue follows from the trace conservation of the Lindblad master equation \eqref{eq:LME}, which can be shown using the cyclic property $\Tr(\hA\hB)=\Tr(\hB\hA)$ of the trace,
\begin{equation}\label{eq:traceConserv}
	\Tr \L(\hR)=0\ \ \forall\hR
	\quad\stackrel{\eqref{eq:tracePairing}}{\Leftrightarrow}\quad
	\L^\dag(\id)=0.
\end{equation}
It immediately follows that every Markovian open quantum system with a finite-dimensional Hilbert space has at least one steady state $\dm_\ss$ such that $\L(\dm_\ss)=0$, i.e., $\id$ is a left $\L$-eigenvector with eigenvalue zero, and $\dm_\ss$ is a/the corresponding right $\L$-eigenvector with eigenvalue zero \footnote{As Liouvillians are super-operators, their eigenvectors are actually operators on the Hilbert space. A system can have multiple steady states such that the zero eigenvalue may be degenerate.}.
\begin{figure}[t]
    \centering
    \includegraphics[width=0.85\textwidth]{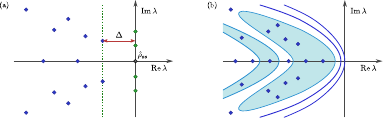}
    \caption{\label{fig:spectrum}\textbf{Structure of Liouvillian spectra.} (a) For finite-dimensional Hilbert spaces, all eigenvalues $\{\lambda\}$ of the Liouvillian have non-positive real parts (Prop.~\ref{prop:nonPositive-finite}), and there is at least one steady state $\dm_\ss$ with $\L$-eigenvalue zero. The spectrum can feature purely imaginary eigenvalues, and, at long times, the system evolves unitarily in the corresponding asymptotic subspace \cite{Baumgartner2008-41,Albert2016-6}. The largest nonzero eigenvalue real part determines the dissipative gap $\Delta$. (b) For infinite-dimensional Hilbert spaces, the spectrum of $\L$ as a map on trace-class operators is still restricted to $\Re\lambda\leq 0$ (Prop.~\ref{prop:nonPositive-inf}) but can have continuous and residual components \cite{Note5} as indicated by the lines and shaded regions, and the system need not have a steady state ($\lambda=0$ is still in the spectrum but not in the point spectrum). Some concrete examples and possible consequences like algebraic (power-law) relaxation are discussed in Refs.~\cite{Prosen2008-10,Cai2013-111,Znidaric2015-92,Song2019-123,Shibata2020-12,LermaHernandez2020-53,Nakagawa2021-126,Haga2021-127,Barthel2020_12,Alaeian2022-5,RubioGarcia2022-106,Ehrhardt2024-6,Mori2024-109}.}
\end{figure}

In this paper, we discuss proofs for a second fundamental property of Liouvillians -- the non-positivity of their spectral values.

\subsection{Non-positivity for finite-dimensional Hilbert spaces}
For finite-dimensional Hilbert spaces $\H$, as illustrated in Fig.~\ref{fig:spectrum}, we have:
\begin{proposition}[Non-positivity of Liouvillian eigenvalues for finite $\dim\H$]\label{prop:nonPositive-finite}
	For a Markovian system with a finite-dimensional Hilbert space and Liouvillian $\L$, the real parts of all $\L$-eigenvalues $\lambda_n$ are non-positive,
	\begin{equation}\label{eq:nonPositive-finite}
		\Re\lambda_n\leq 0\quad \forall n.
	\end{equation}
\end{proposition}
In particular, the nonzero real parts correspond to decay rates of excitations towards the steady state (or asymptotic subspace \cite{Baumgartner2008-41,Albert2016-6}) and $\Delta:=-\max_{\Re\lambda_n\neq 0}\Re\lambda_n$ is the so-called dissipative gap, which determines the asymptotic decay rate \cite{Buca2012-14,Kessler2012-86,Mori2020-125}. Furthermore, the spectrum is symmetric with respect to the real axis as $\L(\hR)=\lambda\hR$ implies $\L(\hR^\dag)\equiv\big(\L(\hR)\big)^\dag=\lambda^*\hR^\dag$, and the spectra of $\L$ and $\L^\dag$ agree.

\subsection{Non-positivity for infinite-dimensional Hilbert spaces}
For infinite-dimensional Hilbert spaces $\H$, the Liouvillian \eqref{eq:Lindblad} of a Markovian system may be an unbounded super-operator. In such cases, we need to restrict $\L$ to a dense domain $\D(\L)$, i.e., a dense subspace of all trace-class operators $\T(\H)$ \footnote{A bounded operator $\hR$ on $\H$ is \emph{trace-class} if and only if $\Tr|\hR|<\infty$ with $|\hR|:=(\hR^\dag\hR)^{1/2}$.} such that $\L(\hR)\in\T(\H)$ for all $\hR\in\D(\L)$ and $\overline{\D(\L)}=\T(\H)$. If the Hamiltonian and Lindblad operators are well-behaved physical operators (even unbounded ones like position, momentum, photon number, or bosonic ladder operators), they will generate a closed and densely defined $\L$ \footnote{$\L:\D(\L)\subseteq \T(\H)\to\T(\H)$ is called \emph{densely defined} if its domain $\D(\L)$ is dense in the space $\T(\H)$ of trace-class operators, i.e., if its closure yields the entire space, $\overline{\D(\L)}=\T(\H)$. It is \emph{closed} if its graph $\{(\hR,\L(\hR))\,|\,\hR\in\D(\L)\}$ is a closed subset of $\T(\H)\times \T(\H)$, i.e., $\L(\hR)=\hS$ for any sequence $\{\hR_n\in\D(\L)\}$ with $\hR_n\to\hR$ and $\L(\hR_n)\to\hS$.}, guaranteeing strong continuity of the associated semigroup $\E_t=e^{\L t}$ \footnote{A semigroup $\E_t$ is \emph{strongly continuous} if any physical state $\dm$ evolves smoothly at $t=0$ in the sense that $\lim_{t\to 0^+}\|\E_t(\dm)-\dm\|_1=0$.}. This is assumed throughout this paper.

Furthermore, for $\dim\H=\infty$, the system need not have a steady state ($\lambda=0$ is not in the point spectrum of $\L$ \footnote{The spectrum of a Liouvillian generally consists of three parts: The point and continuous spectra as in the hydrogen-atom Hamiltonian, and a residual spectrum. They consist of all $\lambda\in\CC$ for which (a) $(\L-\lambda\id)$ is not injective, (b) $(\L-\lambda\id)$ is injective with a dense range, but the inverse (the resolvent) is unbounded, and (c) $(\L-\lambda\id)$ is injective, but its range is not dense, respectively.}) and can be unstable such that $\L^\dag$ has eigenvalues with positive real parts: A simple example is a bosonic mode, where the Lindblad operator $\hL=\sqrt{\gamma}\,\hb^\dag$, given by the creation operator $\hb^\dag$, indefinitely pumps particles into to the system at rate $\gamma$, and the adjoint Liouvillian has corresponding positive eigenvalues like $\gamma/2$ and $\gamma$ with
\begin{subequations}\label{eq:example-b}
\begin{alignat}{4}
	\L^\dag(\hb)            &\stackrel{\eqref{eq:LindbladAdj}}{=}\gamma \hb\hb \hb^\dag        &&-\frac{\gamma}{2}\{\hb \hb^\dag,\hb\}&&=\frac{\gamma}{2}\,\hb\ \quad\text{and}\\
	\L^\dag(\hb^\dag\hb+\id)&\stackrel{\eqref{eq:LindbladAdj}}{=}\gamma \hb\hb^\dag\hb \hb^\dag&&-\frac{\gamma}{2}\{\hb \hb^\dag,\hb^\dag\hb\}&&=\gamma\,(\hb^\dag\hb+\id).
\end{alignat}
\end{subequations}
Of course, $\hb$ and $\hb^\dag\hb+\id$ are unbounded operators. As discussed below, the spectrum of $\L^\dag$ is necessarily confined to the left half-plane (non-positive real parts) if we restrict $\L^\dag$ to the domain
\begin{equation}\label{eq:DL+}
	\D(\L^\dag):=\{\hA\in\B(\H)\,|\, \text{there exists}\ \hB\in\B(\H)\ \ \text{with}\ \ \bbra\hA|\L(\hR)\kket=\bbra\hB|\hR\kket\ \forall\hR\in\D(\L)\},
\end{equation}
which is a weak-$^*$ dense subspace \footnote{The weak-$^*$ topology is the natural topology for the space of bounded operators $\B(\H)$ when viewed as the dual space of the trace-class operators $\T(\H)$ with the duality induced by the trace pairing \eqref{eq:tracePairing}. A set $\D$ is weak-$^*$ in $\B(\H)$ if for every $\hA\in \B(\H)$ there exists a net of operators $\{\hA_\alpha\in \D\}$ which converges to $\hA$ in the sense that $\bbra\hA_\alpha|\hR\kket\to\bbra\hA|\hR\kket$ for all $\hR\in \T(\H)$.} in the space 
of bounded operators $\B(\H)$. In particular, we have:
\begin{proposition}[Non-positivity of Liouvillian spectrum for infinite $\dim\H$]\label{prop:nonPositive-inf}
	For a Markovian system with an infinite-dimensional Hilbert space $\H$ consider the Liouvillian $\L$ as a map on trace-class operators $\T(\H)$ and $\L^\dag$ as a map on bounded operators $\B(\H)$. For unbounded Liouvillians, $\L$ and $\L^\dag$ are restricted to the domains $\D(\L)$ and $\D(\L^\dag)$ as discussed above. Then, all elements of the Liouvillian spectrum $\sigma(\L)$ have non-positive real parts,
	\begin{equation}\label{eq:nonPositive-inf}
		\Re\lambda\leq 0\quad \text{for all}\quad\lambda\in\sigma(\L)=\sigma(\L^\dag)=\{\lambda\in\CC\,|\,\L-\lambda\id\ \, \text{has no bounded inverse}\}.
	\end{equation}
\end{proposition}
As in the finite-dimensional case, the symmetry of the spectrum $\sigma(\L)$ with respect to the real axis results from the fact that the Liouvillian commutes with the adjoin map $\A(\hR):=\hR^\dag$. As $\A\L=\L\A$ and $\A^{-1}=\A$, $\L-\lambda^*$ has a bounded inverse if and only if $(\L-\lambda^*)\A=\A(\L-\lambda)$ has an inverse, i.e., if $\lambda\in \sigma(\L)$. In fact, the symmetry holds individually for the point, continuous, and residual spectra of $\L$ \cite{Note5}.

As stated in the proposition, the spectra of $\L$ and $\L^\dag$ agree (in general not individually for the different components). This follows from the duality induced by the 
(sesquilinear) trace pairing \eqref{eq:tracePairing}: $\L-\lambda\id$ is invertible if and only if $\L^\dag-\lambda^*\id$ is invertible \cite{Kato1995,Rudin1991}. So 
\begin{equation}\label{eq:spectSymmetry}
	\sigma(\L^\dag)=\sigma(\L)^*=\sigma(\L),
\end{equation}
where the second equality is due to the symmetry of the spectrum.

\subsection{Structure of the paper}
As described in Sec.~\ref{sec:viaChannel}, the usual argument for the non-positivity (Props.~\ref{prop:nonPositive-finite} and ~\ref{prop:nonPositive-inf}) is indirect, making recourse to the fact that $\L$ generates a quantum channel (Prop.~\ref{prop:LtoE}) and then concluding on the non-positivity of the Liouvillian eigenvalues based on the observation that quantum channels are contractive (Prop.~\ref{prop:contractivity}).
In Secs.~\ref{sec:viaL-finite} and \ref{sec:viaL-inf}, we provide an alternative entirely algebraic proof for the non-positivity based on the Lindblad form \eqref{eq:Lindblad} of the Liouvillian. Section~\ref{sec:viaL-finite} covers the simpler case of systems with finite-dimensional Hilbert spaces (Prop.~\ref{prop:nonPositive-finite}), and Sec.~\ref{sec:viaL-inf} addresses infinite $\dim\H$ (Prop.~\ref{prop:nonPositive-inf}).

\section{Non-positivity through properties of quantum channels (the traditional argument)}\label{sec:viaChannel}
\begin{proposition}[Liouvillians generate quantum channels \cite{Lindblad1976-48,Gorini1976-17}]\label{prop:LtoE}
	Every Liouvillian \eqref{eq:Lindblad} generates a quantum channel, i.e., a completely positive trace-preserving (CPTP) map \footnote{In the quantum channel $\E_t=e^{\L t}$, $\L$ is a time-instantaneous Liouvillian $\L\equiv\L_{t_0}$ for any time $t_0$ of interest. So, for systems with time-dependent Liouvillians $\L_t$, $\E_t(\dm)$ does \emph{not} solve the Lindblad master equation \eqref{eq:LME}. $\E_t$ is only considered to assess the spectrum of $\L_{t_0}$.}
	\begin{equation}\label{eq:Et}
		\E_t=e^{\L t}\quad\forall t\geq 0.
	\end{equation}
\end{proposition}
\begin{proof}
	To see this, note that $\E_t$ is trace-preserving because
	\begin{equation}
		\partial_t\Tr\E_t(\hR)=\Tr \L(\E_t(\hR))\stackrel{\eqref{eq:traceConserv}}{=}0.
	\end{equation}
	To see that $\E_t$ is also completely positive (CP), we can split the Liouvillian \eqref{eq:Lindblad} into a sum of two terms
	\begin{gather}\label{eq:Lsplit}
		\L=\K+\J\quad\text{with}\quad
		\K(\hR):=-\hK\hR - \hR\hK^\dag,\quad
		\J(\hR):=\sum_a\hL_a^\pdag\hR\hL_a^\dag,
	\end{gather}
	and $\hK=\mri\hH+\frac{1}{2}\sum_a\hL_a^\dag\hL_a^\pdag$. Evidently, $e^{\K t}(\hR)=e^{-\hK t}\hR e^{-\hK^\dag t}$ and $e^{\J t}=\sum_k \frac{t^k}{k!}\J^k(\hR)$ are CP. The Lie-Trotter product formula \cite{Trotter1959} $\E_t=e^{\K t +\J t}=\lim_{N\to\infty}\left(e^{\K t/N }e^{\J t/N }\right)^N$ then implies that $\E_t$ is also CP. The CP maps form a closed convex cone in the space of superoperators such that the CP property is retained under the limit $N\to\infty$.
\end{proof}

\begin{proposition}[Quantum channels are contractive \cite{Nielsen2000}]\label{prop:contractivity}
	For a system with a unique Hilbert space $\H$ \footnote{For simplicity, we assume that the system has a unique Hilbert space such that the decomposition of $\dm-\dm'$ into its positive and negative spectral components $\hQ_\pm$ is possible. Otherwise, we would need to work in the $C^*$-algebra framework, where physical quantities are represented as elements of a unital $C^*$-algebra $\mc{A}$, states are linear positive functionals on $\mc{A}$, and the dynamics is described by unital CP maps like $e^{\L^\dag t}$ \cite{Lindblad1976-48,Davies1976,Bratteli1987-1}. Examples for which this can be necessary are infinite-size systems with spontaneous symmetry breaking and,  according to Haag's theorem, relativistic quantum field theory\cite{Haag1955-29,Hall1957-31}}, the trace distance \cite{Bhatia1997} of density operators $\dm$ and $\dm'$ is non-increasing under the action of any quantum channel $\E$, i.e.,
	\begin{equation}\label{eq:contractivity}
		\|\E(\dm)-\E(\dm')\|_1\leq \|\dm-\dm'\|_1\quad\forall \dm,\dm'\in\T(\H).
	\end{equation}
\end{proposition}
\begin{proof}
	Let us decompose $\dm-\dm'$ into its positive and negative spectral components such that $\dm-\dm'=\hQ_+-\hQ_-$ with positive semidefinite operators $\hQ_\pm\succeq 0$. Note that $\Tr\hQ_+=\Tr\hQ_-$ because $\Tr(\dm-\dm')=0$. Hence,
	\begin{equation}\label{eq:traceDist}
		\|\dm-\dm'\|_1\equiv\Tr|\dm|
		=\Tr(\hQ_++\hQ_-)=2\Tr\hQ_+.
	\end{equation}
	Further, let $\hP_+$ denote the projection onto the positive spectral subspace of $\E(\hQ_+)-\E(\hQ_-)=\E(\dm-\dm')$. Then,
	\begin{align*}
		\|\dm-\dm'\|_1
		&\stackrel{\eqref{eq:traceDist}}{=} 2\Tr\hQ_+=2\Tr\E(\hQ_+)\\
		&\geq 2\Tr\big[\hP_+\E(\hQ_+)\big]
		 \geq 2\Tr\big[\hP_+\big(\E(\hQ_+)-\E(\hQ_-)\big)\big]\\
		&=2\Tr\big[\hP_+\E(\dm-\dm')\big]=\|\E(\dm)-\E(\dm')\|_1,
	\end{align*}
	where we have also used that $\E$ is trace preserving and $\E(\hQ_\pm)\succeq 0$.
\end{proof}

\Emph{Finite-dimensional Hilbert spaces.} ---
For finite $\dim\H$, the contractivity \eqref{eq:contractivity} of the quantum channel \eqref{eq:Et} implies that all eigenvalues $\epsilon_n$ of $\E_t=e^{\L t}$ lie in the unit disk of the complex plane,
\begin{equation}\label{eq:EtEigen}
	\E_t(\hR_n)=\epsilon_n\hR_n\quad\Rightarrow\quad
	|\epsilon_n|=\frac{\|\E_t(\hR_n)\|_1}{\|\hR_n\|_1} \stackrel{\eqref{eq:contractivity}}{\leq} 1.
\end{equation}
The eigenvalues of the Liouvillian $\L$ are $\lambda_n=\frac{1}{t}\ln \epsilon_n$, and Eq.~\eqref{eq:EtEigen} implies that their real parts are all non-positive as stated in Prop.~\ref{prop:nonPositive-finite}.

\Emph{Infinite-dimensional Hilbert spaces.} ---
Similarly, for systems with a unique infinite-dimensional Hilbert space \cite{Note8},
the contractivity (Prop.~\ref{prop:contractivity}) of the strongly continuous semigroup $e^{\L t}$ (Prop.~\ref{prop:LtoE}) implies that the Liouvillian spectrum $\sigma(\L)$ [Eq.~\eqref{eq:nonPositive-inf}] is confined to the left half of the complex plane (non-positive real parts) \cite{Engel2000}: For any complex number $\lambda$ with $\Re\lambda>0$, we can explicitly construct the resolvent $(\L-\lambda\id)^{-1}$ as the Laplace transform $\R=-\int_0^\infty \ud t\, e^{-\lambda t} \E_t$. Due to the contractivity of the quantum channel,
\begin{equation}
	\|\E_t\|_{1\to 1}:=\sup_{\hR\neq 0\in \D(\L)} \frac{\|\E_t(\hR)\|_1}{\|\hR\|_1} \stackrel{\eqref{eq:contractivity}}{\leq} 1
	\quad\Leftrightarrow\quad
	\|\R\|_{1\to 1}\leq \int_0^\infty \ud t\, |e^{-\lambda t}|=\frac{1}{\Re\lambda},
\end{equation}
i.e., $\R$ converges to a well-defined, bounded super-operator, and it is indeed the inverse of $\L-\lambda\id$ as
\begin{equation}
	(\L-\lambda\id)\R=-\int_0^\infty \ud t\, (\L-\lambda\id)\,e^{-\lambda t}\E_t
	=-\int_0^\infty \ud t\, \frac{\ud}{\ud t}\left(e^{-\lambda t} \E_t\right)
	=\id.
\end{equation}
In the first equality, we are allowed to pull $\L$ into the integral because it is closed and densely defined \cite{Note3}. The last equality follows from the fundamental theorem of calculus. So, for any $\Re\lambda>0$, the resolvent $(\L-\lambda\id)^{-1}$ is a bounded super-operator such that $\lambda$ is \emph{not} in the spectrum of $\L$ \cite{Note5} as stated in Prop.~\ref{prop:nonPositive-inf}.

\section{\texorpdfstring{Non-positivity via algebraic properties of the Liouvillian for finite $\dim\H$}{Non-positivity via algebraic properties of the Liouvillian for finite dim(H)}}\label{sec:viaL-finite}
While the usual argument from Sec.~\ref{sec:viaChannel} is clear and uses important properties of quantum channels and Liouvillians, one would hope to also be able to conclude on the non-positivity (Prop.~\ref{prop:nonPositive-finite}) directly from the Lindblad form \eqref{eq:Lindblad} of the Liouvillian, only using algebraic properties. We will see in the following that this is indeed possible.

We start by proving two useful properties of Liouvillians, which also apply for infinite $\dim\H$ and will be reused in Sec.~\ref{sec:viaL-finite}. The first is one of the Kossakowski conditions \cite{Kossakowski1972-3,Rivas2012}.
\begin{lemma}\label{prop:lemma1-finite}
	Let $\{|i\ket\}$ be an orthonormal basis for a system with a separable Hilbert space. Transition rates between distinct orthogonal states are non-negative, i.e.,
	matrix elements of the Liouvillian \eqref{eq:Lindblad} obey,
	\begin{equation}\label{eq:np-1}
		\bra j|\L\big(|i\ket\bra i|\big)|j\ket\geq 0\quad \forall i\neq j.
	\end{equation}
\end{lemma}
\begin{proof}
	The Lindblad form \eqref{eq:Lindblad} of the Liouvillian implies
	\begin{align}\nonumber 
		\bra j|\L\big(|i\ket\bra i|\big)|j\ket
		=&-\mri\bra j|\hH|i\ket\bra i|j\ket+\mri\bra j|i\ket\bra i|\hH|j\ket \\\nonumber
		&+\sum_a\Big(\bra j|\hL_a^\pdag|i\ket\bra i|\hL_a^\dag|j\ket
		 -\frac{1}{2}\bra j|\hL_a^\dag\hL_a^\pdag|i\ket\bra i|j\ket
		 -\frac{1}{2}\bra j|i\ket\bra i|\hL_a^\dag\hL_a^\pdag|j\ket\Big) \\
		=&\sum_a\Big(\big|\bra j|\hL_a|i\ket\big|^2
		            -\bra i|\hL_a^\dag\hL_a^\pdag|i\ket\,\delta_{i,j}\Big).
		\label{eq:matrixEl1}
	\end{align}
 The matrix element \eqref{eq:matrixEl1} is real and, for all $i\neq j$,
	$\bra j|\L\big(|i\ket\bra i|\big)|j\ket=\sum_a\big|\bra j|\hL_a|i\ket\big|^2\geq 0$, proving
	the non-negativity \eqref{eq:np-1}.
\end{proof}

\begin{lemma}\label{prop:lemma2}
	For any Liouvillian $\L$ and operator $\hA$, we have the partial ordering
	\begin{equation}\label{eq:LAordering}
		\L^\dag(\hA^\dag\hA)\succeq \L^\dag(\hA^\dag)\hA+\hA^\dag\L^\dag(\hA).
	\end{equation}
\end{lemma}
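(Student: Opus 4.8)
\emph{Proof proposal.} The plan is to split the adjoint Liouvillian \eqref{eq:LindbladAdj} into its Hamiltonian and dissipative parts, to observe that the Hamiltonian part obeys the Leibniz rule exactly and hence drops out of \eqref{eq:LAordering}, and then to expand the dissipative part and recognize the remainder as a manifestly positive semidefinite sum of operator ``squares''.

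First I would write $\L^\dag=\L^\dag_H+\L^\dag_D$ with $\L^\dag_H(\hR):=\mri[\hH,\hR]$ and $\L^\dag_D(\hR):=\sum_k\big(\hL_k^\dag\hR\hL_k^\pdag-\tfrac12\{\hL_k^\dag\hL_k^\pdag,\hR\}\big)$. Since a commutator with a fixed operator is a derivation, $\L^\dag_H(\hA^\dag\hA)=\L^\dag_H(\hA^\dag)\hA+\hA^\dag\L^\dag_H(\hA)$, so the Hamiltonian terms cancel in the difference of the two sides of \eqref{eq:LAordering}. It therefore suffices to show $\L^\dag_D(\hA^\dag\hA)\succeq\L^\dag_D(\hA^\dag)\hA+\hA^\dag\L^\dag_D(\hA)$.

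Next I would expand the dissipative part term by term. Working at fixed $k$ and abbreviating $\hL\equiv\hL_k$, I would write out $\hL^\dag\hA^\dag\hA\hL$, $\tfrac12\{\hL^\dag\hL,\hA^\dag\hA\}$, the terms of $\L^\dag_D(\hA^\dag)\hA$ and $\hA^\dag\L^\dag_D(\hA)$. In the difference $\L^\dag_D(\hA^\dag\hA)-\L^\dag_D(\hA^\dag)\hA-\hA^\dag\L^\dag_D(\hA)$ the four contributions in which $\hL^\dag\hL$ sits adjacent to $\hA^\dag\hA$ cancel in pairs, leaving
\begin{equation*}
	\sum_k\Big(\hL_k^\dag\hA^\dag\hA\hL_k^\pdag-\hL_k^\dag\hA^\dag\hL_k^\pdag\hA-\hA^\dag\hL_k^\dag\hA\hL_k^\pdag+\hA^\dag\hL_k^\dag\hL_k^\pdag\hA\Big).
\end{equation*}
Setting $\hat{X}_k:=[\hA,\hL_k]=\hA\hL_k^\pdag-\hL_k^\pdag\hA$, one checks directly that $\hat{X}_k^\dag\hat{X}_k$ is exactly the $k$-th summand above, so the difference equals $\sum_k\hat{X}_k^\dag\hat{X}_k\succeq0$, which is \eqref{eq:LAordering}.

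The only delicate point is the bookkeeping in the expansion — keeping track of precisely which $\hL_k^\dag\hL_k$-type terms cancel — but there is no analytic obstacle since the Hilbert space is finite-dimensional. Conceptually, the crucial observation is the factorization into $\sum_k\hat{X}_k^\dag\hat{X}_k$: the positivity of this ``dissipation defect'' of $\L^\dag$ is the algebraic shadow of the complete positivity of the channel generated by $\L$ (Prop.~\ref{prop:LtoE}).
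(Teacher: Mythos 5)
Your proposal is correct and follows essentially the same route as the paper's proof: a direct expansion of the difference $\L^\dag(\hA^\dag\hA)-\L^\dag(\hA^\dag)\hA-\hA^\dag\L^\dag(\hA)$, cancellation of the Hamiltonian and anticommutator terms, and factorization of the remainder as $\sum_k(\hL_k\hA-\hA\hL_k)^\dag(\hL_k\hA-\hA\hL_k)\succeq 0$, which coincides with your $\sum_k\hat{X}_k^\dag\hat{X}_k$ up to an irrelevant overall sign in $\hat{X}_k$. Your explicit observation that the Hamiltonian part drops out because the commutator is a derivation is a slightly cleaner way to organize the same cancellation that the paper carries out term by term.
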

\begin{proof}
	The positive semidefinitenesss of the operator
	\begin{align}
		\L^\dag(&\hA^\dag\hA)- \L^\dag(\hA^\dag)\hA-\hA^\dag\L^\dag(\hA) \nonumber \\
		\stackrel{\eqref{eq:LindbladAdj}}{=}&
		\mri [\hH,\hA^\dag\hA]-\mri[\hH,\hA^\dag]\hA-\mri \hA^\dag[\hH,\hA]
		+\sum_a\Big( \hL_a^\dag\hA^\dag\hA\hL_a^\pdag-\frac{1}{2}\hL_a^\dag\hL_a\hA^\dag\hA-\frac{1}{2}\hA^\dag\hA\hL_a^\dag\hL_a \nonumber\\
		&-\hL_a^\dag\hA^\dag\hL_a^\pdag\hA+\frac{1}{2}\hL_a^\dag\hL_a^\pdag\hA^\dag\hA+\frac{1}{2}\hA^\dag\hL_a^\dag\hL_a^\pdag\hA-\hA^\dag\hL_a^\dag\hA\hL_a^\pdag+\frac{1}{2}\hA^\dag\hL_a^\dag\hL_a^\pdag\hA+\frac{1}{2}\hA^\dag\hA\hL_a^\dag\hL_a^\pdag \Big) \nonumber \\
		=&\sum_a\Big( \hL_a^\dag\hA^\dag\hA\hL_a^\pdag-\hL_a^\dag\hA^\dag\hL_a^\pdag\hA-\hA^\dag\hL_a^\dag\hA\hL_a^\pdag+ \hA^\dag\hL_a^\dag\hL_a^\pdag\hA\Big) \nonumber\\
		=&\sum_a \Big(\hL_a\hA-\hA\hL_a \Big)^\dag\Big(\hL_a\hA-\hA\hL_a \Big)\succeq 0
		\label{eq:inequ_AA}
	\end{align}
	is equivalent to the partial ordering \eqref{eq:LAordering}.
\end{proof}

Based on these two lemmas, the non-positivity of the $\L$-eigenvalue real parts for systems with a finite-dimensional Hilbert space (Prop.~\ref{prop:nonPositive-finite}) can be established as follows: Suppose that $\L$ has an eigenvalue $\lambda$. The complex conjugate $\lambda^*$ is then an eigenvalue of $\L^\dag$ \footnote{Note that this is in general only true for finite-dimensional Hilbert spaces. For example, the Liouvillian for the model in Eq.~\eqref{eq:example-b} has a left eigenvector $\id$ with eigenvalue $0$ (trace conservation), but no steady state, i.e., no corresponding right eigenvector.}. Let $\hA$ denote the corresponding eigenvector of $\L^\dag$ such that
\begin{equation}\label{eq:Leig}
	\L^\dag(\hA)=\lambda^*\hA \quad\text{and}\quad
	\L^\dag(\hA^\dag)=(\L^\dag(\hA))^\dag=\lambda\hA^\dag.
\end{equation}
Employing Lemma~\ref{prop:lemma2}, we have
\begin{equation}\label{eq:ineq-re}
	\L^\dag(\hA^\dag\hA)\stackrel{\eqref{eq:LAordering}}{\succeq} \L^\dag(\hA^\dag)\hA+\hA^\dag\L^\dag(\hA)
	\stackrel{\eqref{eq:Leig}}{=}(\lambda^*+\lambda)\hA^\dag\hA=2\Re(\lambda)\hA^\dag\hA.
\end{equation}
The operator $\hA^\dag\hA$ is positive semidefinite and can be diagonalized such that
\begin{equation}\label{eq:AAdiag}
	\hA^\dag\hA=\sum_i a_i | i\ket\bra i|,\quad \text{where}\quad
	\bra i|j\ket=\delta_{i,j}.
\end{equation}
As $\hA\neq 0$ and Prop.~\ref{prop:nonPositive-finite} concerns finite-dimensional Hilbert spaces, we can renormalize $\hA$ by a constant and order the eigenvalues such that
\begin{equation}\label{eq:a_i}
	1=a_1\geq a_2\geq\cdots\geq0.
\end{equation}
Then, using Eq.~\eqref{eq:ineq-re} we have
\begin{equation}\label{eq:ReBound}
	2\Re(\lambda)=2\Re(\lambda)\bra 1|\hA^\dag\hA|1\ket
	\stackrel{\eqref{eq:ineq-re}}{\leq} \bra 1|\L^\dag\big(\hA^\dag\hA\big)|1\ket.
\end{equation}
Finally, with Lemma~\ref{prop:lemma1-finite} we find
\begin{align}\nonumber
	\bra 1|\L^\dag\big(\hA^\dag\hA\big)|1\ket
	&\stackrel{\eqref{eq:AAdiag}}{=}\sum_i a_i\bra 1|\L^\dag\big(|i\ket\bra i|\big)|1\ket
	= \bra 1|\L^\dag\big(|1\ket\bra 1|\big)|1\ket+\sum_{i>1} a_i \bra 1|\L^\dag\big(|i\ket\bra i|\big)|1\ket\\
    &\hspace{-1.5ex}\stackrel{\eqref{eq:np-1},\eqref{eq:a_i}}{\leq}\! \bra 1|\L^\dag\big(|1\ket\bra 1|\big)|1\ket+\sum_{i>1} \bra 1|\L^\dag\big(|i\ket\bra i|\big)|1\ket
    =\bra 1|\L^\dag\big(\id\big)|1\ket
	\stackrel{\eqref{eq:traceConserv}}{=}0.
    \label{eq:final}
\end{align}
The inequalities~\eqref{eq:ReBound} and \eqref{eq:final} show that $\Re\lambda\leq 0$, i.e., we have found an entirely algebraic proof for the fundamental Prop.~\ref{prop:nonPositive-finite}, only using the Lindblad form \eqref{eq:Lindblad} of the Liouvillian.

\section{\texorpdfstring{Non-positivity via algebraic properties of the Liouvillian for infinite $\dim\H$}{Non-positivity via algebraic properties of the Liouvillian for infinite dim(H)}}\label{sec:viaL-inf}
For the algebraic proof of non-positivity for infinite-dimensional Hilbert spaces (Prop.~\ref{prop:nonPositive-inf}), we can reformulate Lemma~\ref{prop:lemma1-finite} as:
\begin{lemma}\label{prop:lemma1-inf}
	Let $\dm\in\D(\L)$ and $\htau\in\D(\L^\dag)$ be self-adjoint positive semidefinite operators with disjoint support, i.e., $\bbra\htau|\dm\kket=0$. The corresponding transition rates according to the dynamics generated by the Liouvillian \eqref{eq:Lindblad} are non-negative in the sense that
	\begin{equation}\label{eq:np-2}
		\Tr\big(\htau\L(\dm)\big)\stackrel{\eqref{eq:tracePairing}}{\equiv} \bbra \L^\dag(\htau)|\rho\kket \geq 0\quad \text{if}\quad \bbra\htau|\dm\kket=0.
	\end{equation}
\end{lemma}

Using the equality \eqref{eq:spectSymmetry} of the $\L$ and $\L^\dag$ spectra, we need to show that all $\lambda\in\CC$ with $\Re\lambda>0$ are in the resolvent set of $\L^\dag$ as an operator on the weak-$^*$ dense subspace $\D(\L^\dag)$ of bounded operators \cite{Note6}, defined in Eq.~\eqref{eq:DL+}.

For any bounded operator $\htau\in\D(\L^\dag)$ with $\htau\succeq 0$, there exists a state $\dm\in\T(\H)$ or at least a sequence $\{\dm_n\}$ of trace-class operators with the limit $\dm$ as a state on the $C^*$-algebra of bounded operators such that
\begin{equation}\label{eq:normAchievingRho}
	\bbra\htau|\dm\kket = \|\htau\|\quad\text{with}\quad \dm\succeq 0\quad\text{and}\quad \Tr(\dm)=1.
\end{equation}
The norm-achieving state $\dm$ is the equivalent of $|1\ket\bra 1|$ in Sec.~\ref{sec:viaL-finite}. Now, $\htau':=\|\htau\|\,\id-\htau$ is positive semidefinite and obeys $\bbra\htau'|\dm\kket=0$ such that Lemma~\ref{prop:lemma1-inf} gives
\begin{equation}\label{eq:np-3}
	0\stackrel{\eqref{eq:np-2}}{\geq} - \bbra \L^\dag(\htau')|\rho\kket \stackrel{\eqref{eq:traceConserv}}{=}  \bbra \L^\dag(\htau)|\rho\kket,
\end{equation}
where we have used the trace conservation \eqref{eq:traceConserv}. The inequality \eqref{eq:np-3} shows that any Markovian time evolution tends to lead out of the norm-achieving subspace of $\htau$.
Choosing $\htau=\hA^\dag\hA$ for any bounded operator $\hA\in\D(\L^\dag)$, Lemma~\ref{prop:lemma2} yields the bound
\begin{equation}\label{eq:np-4}
	0\stackrel{\eqref{eq:np-3}}{\geq} \bbra \L^\dag(\hA^\dag\hA)|\rho\kket
	\stackrel{\eqref{eq:LAordering}}{\geq} \bbra\L^\dag(\hA^\dag)\hA|\rho\kket+\bbra\hA^\dag\L^\dag(\hA)|\rho\kket
	=2\Re\bbra\hA^\dag\L^\dag(\hA)|\rho\kket.
\end{equation}

To assess the resolvent set of $\L^\dag$, consider $\hB:=(\lambda \id-\L^\dag)\hA=\lambda\hA-\L^\dag(\hA)$. With
\begin{equation}\label{eq:np-5}
	\Re\bbra\hA^\dag\hB|\rho\kket=\Re\bbra\hA^\dag(\lambda \id-\L^\dag)\hA|\rho\kket
	\stackrel{\eqref{eq:normAchievingRho}}{=}\Re\lambda\|\hA\|^2 - \Re\bbra\hA^\dag\L^\dag(\hA)|\rho\kket
	\stackrel{\eqref{eq:np-4}}{\geq} \Re\lambda\,\|\hA\|^2,
\end{equation}
the upper bound $\Re\bbra\hA^\dag\hB|\rho\kket\leq |\bbra\hA^\dag\hB|\rho\kket|=|\Tr(\hB\rho\hA^\dag)|$,
and the Cauchy-Schwarz inequality for pre-inner products \cite{Rudin1991,Reed1980}
\begin{equation}
	|\Tr(\hB\rho\hA^\dag)|^2\leq |\Tr(\hA\rho\hA^\dag)|\,|\Tr(\hB\rho\hB^\dag)| \leq \|\hA\|^2\|\hB\|^2,
\end{equation}
we arrive at the bound
\begin{equation}
	\|\hA\|\|\hB\|\stackrel{\eqref{eq:np-5}}{\geq}\Re\lambda\,\|\hA\|^2
	\quad\Leftrightarrow\quad
	\|(\L^\dag-\lambda \id)\hA\|\geq \Re\lambda\,\|\hA\|,
\end{equation}
which ensures that $\L^\dag-\lambda \id$ is injective with a closed range if $\Re\lambda>0$. Because $\L^\dag$ generates a physical (Markovian) semigroup, it follows from the Lumer-Phillips theorem that the range is dense \cite{Engel2000}. Thus, $\L^\dag-\lambda \id$ has a bounded everywhere-defined inverse if $\Re\lambda>0$, completing the proof of Prop.~\ref{prop:nonPositive-inf}.

\vspace{1em}\noindent \emph{Acknowledgments:}  We gratefully acknowledge support through the U.S.\ National Science Foundation grant no.\ PHY-2412555 and the Army Research Office through MURI grant no.\ W911NF1810218.

\newpage

\end{document}